\newtheorem{theorem}{\bf Theorem}[section]
\newtheorem{lemma}[theorem]{\bf Lemma}
\newenvironment{proof}{\noindent{\em Proof:}}{\quad \hfill$\Box$\vspace{2ex}}
\def \bN {\Bbb N}
\def \bZ {\Bbb Z}
\def \bR {\Bbb R}
\def \cA {{\cal A}}
\def \cB {{\cal B}}
\def \cF {{\cal F}}
\def \and {\, \mbox{\rm and}\, }
\def \sinc {\,{\rm sinc}\,}
\def \supp {\,{\rm supp}\,}
\newcommand{\Rmnum}[1]{\expandafter\@slowromancap\romannumeral #1@}
\begin{document}
\title{\bf Exponential Approximation of Bandlimited Functions from Average Oversampling
 \thanks{Supported by Guangdong
Provincial Government of China through the ``Computational Science
Innovative Research Team" program.}}
\author{Haizhang Zhang\thanks{School of Mathematics and Computational
Science and Guangdong Province Key Laboratory of Computational
Science, Sun Yat-sen University, Guangzhou 510275, P. R. China. E-mail address: {\it zhhaizh2@mail.sysu.edu.cn}. Supported in part by Natural Science Foundation of China under grants 11222103 and 11101438, and by the US Army Research Office.}}
\date{}
\maketitle
\begin{abstract}
Weighted average sampling is more practical and numerically more stable than sampling at single points as in the classical Shannon sampling framework. Using the frame theory, one can completely reconstruct a bandlimited function from its suitably-chosen average sample data. When only finitely many sample data are available, truncating the complete reconstruction series with the standard dual frame results in very slow convergence. We present in this note a method of reconstructing a bandlimited function from finite average oversampling with an exponentially-decaying approximation error.

\noindent{\bf Keywords:} bandlimited functions, average oversampling, exponential decayness

\noindent {\bf 2010 Mathematical Subject Classification: 41A60, 94A20}
\end{abstract}

\section{Introduction}
\setcounter{equation}{0}
The objective of this note is to show that we can achieve exponentially decaying approximation error in reconstructing a bandlimited function from its average sample data.

Denote for each $\delta>0$ by $\cB_\delta$ the Paley-Wiener space of functions $f\in L^2(\bR)\cap C(\bR)$ that are bandlimited to $[-\delta,\delta]$, that is, $\supp \hat{f}\subseteq[-\delta,\delta]$. The Fourier transform in this note takes the following form
$$
\hat{f}(\xi):=\frac1{\sqrt{2\pi}}\int_\bR f(x)e^{-ix\xi}dx,\ \ \xi\in\bR.
$$
The celebrated Shannon sampling theorem \cite{Shannon,Whittaker} states that each bandlimited function can be completely reconstructed from its samplings at the Nyquist rate. For instance, it holds for each $f\in \cB_\pi$ that
$$
f(x)=\sum_{j\in\bZ}f(j)\sinc(x-j),\ \ x\in\bR,
$$
where $\sinc(x):=\sin(\pi x)/(\pi x)$.

Sampling a function $f$ at an integer point $j$ can be viewed as applying the Delta distribution to the function $f(\cdot+j)$. The Delta distribution is natural mathematically but hard to implement physically. A more practical way is to approximate the Delta distribution by an averaging function with small support around the origin. This consideration leads to the following average sampling strategy:
\begin{equation}\label{averageform}
\mu_j(f):=\int_{-\sigma/2}^{\sigma/2}f(j+x)d\nu(x),\ \ j\in\bZ,
\end{equation}
where $\sigma$ is a small positive constant and $\nu$ is a positive Borel probability measure on $[-\sigma/2,\sigma/2]$. Compared to sampling at a single point, average sampling of the above form is also numerically more stable as the variance of the noise from the sampled values can usually be reduced by the averaging process. In fact, sophisticated algorithms based on average sampling that are highly robust to sampling noises have been proposed in \cite{DaubechiesDevore}.

Various extensions of the Shannon sampling theorem have been established for average sampling \cite{Aldroubi2002,Aldroubi2005,Grochenig,SunZhou2002a,SunZhou2002}. For instance, it was proved in \cite{Grochenig} that a function $f\in \cB_\delta$ can be completely recovered from its average sample data
$$
\int_{\bR}f(x)u_j(x)dx,\ \ j\in\bZ
$$
when
$$
0<x_{j+1}-x_j\le \sigma<\frac1{\sqrt{2}\delta},\ \ j\in\bZ
$$
and $u_j\ge0$ are nontrivial functions in $L^1(\bR)$ with $\supp u_j\subseteq [x_j-\frac\sigma2,x_j+\frac\sigma2]$.

Such average sampling theorems are obtained from the general frame theory \cite{AST,CCS,CS,Duffin,Young,Zhang}. One sees that each sampling function $\mu_j$ in (\ref{averageform}) is a continuous linear functional on $\cB_\delta$. By the Riesz representation theorem, there hence exists $g_j\in \cB_\delta$ such that
$$
\mu_j(f)=\langle f,g_j\rangle_{L^2(\bR)},\ \ j\in\bZ,
$$
where $\langle\cdot,\cdot\rangle_{L^2(\bR)}$ is the standard inner product in $L^2(\bR)$. Thus, conditions assuring that $g_j$, $j\in\bZ$ form a Riesz basis or frame for $\cB_\delta$ will immediately yield a complete reconstruction formula
$$
f=\sum_{j\in\bZ}\mu_j(f)\tilde{g_j},
$$
where $\tilde{g_j}$ denotes the standard dual frame of $g_j$. A well-known iteration scheme \cite{Daubechies} can be engaged to approximately compute $f$ from $\mu_j(f)$, $j\in\bZ$ with exponentially-decaying error.

However, when only finitely many sampling data $\mu_j(f)$ are available, this reconstruction method can be very slow. For example, when $f\in\cB_\pi$ and $\mu_j(f)=f(j)$, the standard dual frame $\tilde{g_j}$ is exactly $\sinc(\cdot-j)$. In this case, it is known \cite{Jagerman} that
$$
\sup_{x\in(0,1)}\biggl|f(x)-\sum_{j=-n}^nf(j)\sinc(x-j)\biggr|=\|f\|_{L^2(\bR)}O(\frac{1}{\sqrt{n}}).
$$
In this note, we aim at providing an explicit method with exponential approximation ability in reconstructing a bandlimited function from its finite average oversampling data (\ref{averageform}). Our idea of overcoming the above difficulty is to find a fast-decaying dual frame (actually a pseudo-frame \cite{Li2004}). The approach is also connected to the approximation question of how to smoothly extend a function so that the Fourier transform of the extended function would decay at an optimal rate.

We next introduce our main result in details. We consider functions in $\cB_\delta$ with $\delta<\pi$ and the average sampling (\ref{averageform}). The probability measure $\nu$ is required to be symmetric about the origin and satisfy
\begin{equation}\label{consgammacond}
\sigma\delta<\pi.
\end{equation}
Consequently,
\begin{equation}\label{consgamma}
\gamma:=\cos(\frac{\sigma\delta}2)>0.
\end{equation}

\begin{theorem}
Let $f\in\cB_\delta$ with $\delta<\pi$. Given the average sampling (\ref{averageform}) and the condition (\ref{consgammacond}), there exists a function $\phi\in L^2(\bR)\cap C(\bR)$ such that
$$
\sup_{x\in(0,1)}\biggl|f(x)-\frac1{\sqrt{2\pi}}\sum_{j=-n}^n \mu_j(f)\phi(x-j)\biggr|\le C\|f\|_{L^2(\bR)}\frac1{n^{3/4}}\exp\left(-\frac{2(\pi-\delta)\gamma n}{e(1+\sqrt{2})^2(\gamma+\sigma(\pi-\delta))}\right),
$$
where $C$ is a positive constant independent of $f$ and $n$.
\end{theorem}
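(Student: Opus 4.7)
The plan is to construct $\phi$ as the inverse Fourier transform of a function $\psi$ supported in $[-\pi,\pi]$ that equals $1/\hat\nu$ on $[-\delta,\delta]$, so that the full series $\frac{1}{\sqrt{2\pi}}\sum_{j\in\bZ}\mu_j(f)\phi(\cdot-j)$ reconstructs $f$ exactly on $\cB_\delta$ (a pseudo-frame identity), and then to bound $\phi$ at large arguments so sharply that the stated truncation estimate follows from Cauchy--Schwarz.

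I first move to the Fourier side. Because $\nu$ is a symmetric probability measure, $\hat\nu$ is real and even, and (\ref{consgamma}) gives $\hat\nu(\xi)\ge\gamma>0$ for $\xi\in[-\delta,\delta]$. A short calculation shows
$$\mu_j(f)=\frac{1}{\sqrt{2\pi}}\int_{-\delta}^{\delta}\hat f(\xi)\hat\nu(\xi)e^{ij\xi}\,d\xi,$$
so $\{\mu_j(f)/\sqrt{2\pi}\}$ is the sequence of $2\pi$-Fourier coefficients of $\hat f\hat\nu\chi_{[-\delta,\delta]}$. Setting $\psi:=\hat\phi$, Poisson summation applied to $\frac{1}{\sqrt{2\pi}}\sum_j\mu_j(f)\phi(x-j)$ produces the expression $\frac{1}{\sqrt{2\pi}}\int_{-\delta}^\delta\hat f(\xi)\hat\nu(\xi)\bigl(\sum_k\psi(\xi+2\pi k)e^{2\pi ikx}\bigr)e^{ix\xi}d\xi$; by $\supp\psi\subset[-\pi,\pi]$ all aliasing replicas vanish on $[-\delta,\delta]$ (using $2\pi-\delta>\pi$), and $\hat\nu\psi\equiv1$ there, so the sum collapses to $f(x)$. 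This handles the exact reconstruction.

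The crux is the choice of $\psi$ on the gap $[-\pi,-\delta]\cup[\delta,\pi]$. Iterated integration by parts gives the generic bound
$$|\phi(x)|\le\frac{\|\psi^{(k)}\|_{L^1([-\pi,\pi])}}{\sqrt{2\pi}\,|x|^k},\qquad k\in\bN,$$
so the task reduces to extending $1/\hat\nu$ from $[-\delta,\delta]$ into the gap with optimally controlled derivative norms. I construct $\psi$ by an iterated-mollifier/convolution scheme on the gap of width $\pi-\delta$ and then control the product rule for $1/\hat\nu$ using the Lipschitz bound $|\hat\nu(\xi)-\hat\nu(\xi')|\le(\sigma/2)|\xi-\xi'|$ together with the lower bound $\hat\nu\ge\gamma$. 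Tracking the combinatorics carefully should give a derivative bound of the form
$$\|\psi^{(k)}\|_{L^1}\le A\bigl(Bk\bigr)^{k},\qquad B=\frac{(1+\sqrt 2)^2(\gamma+\sigma(\pi-\delta))}{2(\pi-\delta)\gamma},$$
where the factor $(1+\sqrt 2)^2$ tracks the aspect ratio of the mollifier and $\gamma+\sigma(\pi-\delta)$ comes from the product rule on $1/\hat\nu$. Optimizing $k\approx|x|/(eB)$ and applying Stirling then yields
$$|\phi(x)|\le C_0\,|x|^{-3/4}\exp\!\Bigl(-\tfrac{|x|}{eB}\Bigr),\qquad|x|\to\infty,$$
the $|x|^{-3/4}$ prefactor being the Stirling correction. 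This derivative estimate, with the precise value of $B$, is where all the constants in the theorem enter and is the main technical obstacle.

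To finish, Parseval on $[-\pi,\pi]$ together with $|\hat\nu|\le 1$ yields $\sum_j|\mu_j(f)|^2\le\|f\|_{L^2(\bR)}^2$. For $x\in(0,1)$ and $|j|>n$ we have $|x-j|\ge n$, so Cauchy--Schwarz gives
$$\Bigl|f(x)-\tfrac{1}{\sqrt{2\pi}}\sum_{j=-n}^{n}\mu_j(f)\phi(x-j)\Bigr|\le\tfrac{\|f\|_{L^2(\bR)}}{\sqrt{2\pi}}\Bigl(\sum_{|j|>n}|\phi(x-j)|^2\Bigr)^{1/2}.$$
Plugging in the pointwise decay and estimating $\sum_{j>n}j^{-3/2}e^{-2j/(eB)}\asymp n^{-3/2}e^{-2n/(eB)}$ produces $C\|f\|_{L^2(\bR)}\,n^{-3/4}\exp(-n/(eB))$, matching the theorem's exponent. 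All steps beyond the sharp derivative bound are routine Fourier analysis and $\ell^2$ estimates, so the whole argument reduces to nailing down that bound.
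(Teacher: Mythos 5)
Your overall architecture (Fourier-side dual function, exact reconstruction on $\cB_\delta$, decay of $\phi$ via $L^1$ bounds on derivatives of $\hat\phi$, Cauchy--Schwarz truncation) is the same as the paper's, but your plan as written cannot be carried out, for a reason more basic than the combinatorics you defer. You want a \emph{single} function $\phi$, with $\supp\hat\phi\subseteq[-\pi,\pi]$, satisfying $|\phi(x)|\le C_0|x|^{-3/4}e^{-|x|/(eB)}$; no nonzero such function exists. Exponential decay of $\phi$ would make $\hat\phi$ extend holomorphically to the strip $|\Im \xi|<1/(eB)$, and a function holomorphic on a strip that vanishes on the ray $\xi>\pi$ vanishes identically, contradicting $\hat\phi=1/W\neq 0$ on $[-\delta,\delta]$. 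Equivalently, your key bound $\|\psi^{(k)}\|_{L^1}\le A(Bk)^k$ cannot hold for one fixed compactly supported $\psi$ and \emph{all} $k$: since $\psi^{(j)}(x)=\int_{-\infty}^x\psi^{(j+1)}(t)\,dt$, it would give $\|\psi^{(j)}\|_{L^\infty}\le A(B(j+1))^{j+1}\le A'(Be)^{j+1}(j+1)!$, forcing $\psi$ to be real-analytic, hence (being compactly supported) identically zero. This is precisely why the paper's $\phi$ depends on $n$: it fixes $k=1+\lfloor n/(\beta e)\rfloor$ as in (\ref{choice}), builds an extension optimized for that single $k$, and uses only the single estimate (\ref{fourierdecaying}). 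Each such $\phi$ decays only like $|x|^{-k}$; the exponential rate emerges from optimizing $k$ against $n$, not from exponential decay of any one $\phi$. Your final summation step, $\sum_{j>n}j^{-3/2}e^{-2j/(eB)}$, leans on exactly the impossible single-$\phi$ decay.

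Even after repairing this by letting $\psi=\psi_k$ depend on $k$ (and hence on $n$), two quantitative gaps remain, and they are the entire content of the theorem's constant. First, the factor $(1+\sqrt{2})^2$ is not a mollifier aspect ratio: in the paper it enters through the smallest-eigenvalue estimate (\ref{smalleigenvalue}) for the Hilbert matrix, $\rho_{min}(H_k)^{-1/2}\le C_H k^{-1/4}(1+\sqrt{2})^{2k}$ from \cite{Wilf}, after the constrained $L^1$ extension problem is relaxed to an $L^2$ problem via (\ref{cauchyschwarz12}) and solved exactly by inverting $H_k$ (Lemmas \ref{extension} and \ref{vkhilbert}); you give no argument producing this, or any, explicit constant, and by your own admission everything reduces to that bound. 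Second, your support choice $[-\pi,\pi]$ leaves a gap of width only $\pi-\delta$ for the extension, whereas Lemma \ref{dualframe} permits $\supp\hat\phi\subseteq[-2\pi+\delta,2\pi-\delta]$, a gap of width $2(\pi-\delta)$. Running the paper's computation with gap width $G$ produces the rate constant $\beta_G=(1+\sqrt{2})^2\bigl(\sigma/2+\gamma/G\bigr)/\gamma$; with $G=2(\pi-\delta)$ this equals the paper's $\beta$ (your $B$), but with your $G=\pi-\delta$ it equals $(1+\sqrt{2})^2\bigl(\gamma+\sigma(\pi-\delta)/2\bigr)/\bigl(\gamma(\pi-\delta)\bigr)$, which is strictly larger. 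So even the repaired version of your argument proves a strictly weaker exponential rate than the one stated in the theorem.
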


The above theorem will be proved in the next section. In particular, the crucial function $\phi$ will be explicitly constructed. Numerical experiments to justify our result are presented in Section 3.

\section{Exponential Approximation Reconstruction}
\setcounter{equation}{0}
We shall see that $\{\mu_j(f):j\in\bZ\}$ defined by (\ref{averageform}) can be represented through a frame in $\cB_\delta$. Our approach is to find a dual frame that decays fast. To this end, we recall a few basic facts about the Paley-Wiener space $\cB_\delta$.

The space $\cB_\delta$ endowed with the $L^2$-norm on $\bR$ is a reproducing kernel Hilbert space with $\sin(\delta(x-y))/(\pi(x-y))$ as its reproducing kernel. In other words, we have for each $f\in\cB_\delta$ and each $x\in\bR$
$$
f(x)=\int_\bR f(y)\frac{\sin\delta(x-y)}{\pi(x-y)}dy.
$$
It happens that $\sinc(\cdot-j)$, $j\in\bZ$ form an orthonormal basis for $\cB_\pi$, which implies the useful Parseval identity in $\cB_\pi$
\begin{equation}\label{parseval}
\|f\|_{L^2(\bR)}^2=\sum_{j\in\bZ}|\langle f,\sinc(\cdot-j)\rangle_{L^2(\bR)}|^2=\sum_{j\in\bZ}|f(j)|^2,\ \ f\in\cB_\pi.
\end{equation}
Also, $\cB_\pi$ is translation-invariant in the sense that for all $f\in\cB_\pi$ and $x\in \bR$, $f(\cdot+x)$ remains in $\cB_\pi$ and has the same norm as that of $f$.

When $\delta<\pi$, $\cB_\delta$ is a subspace of $\cB_\pi$. The identity (\ref{parseval}) hence holds true for functions $f\in\cB_\delta$. Immediately, we get for $f\in\cB_\delta$
\begin{equation}\label{frameupper}
\sum_{j\in\bZ}|\mu_j(f)|^2=\sum_{j\in\bZ}\biggl|\int_{-\sigma/2}^{\sigma/2}f(j+x)d\nu(x)\biggr|^2\le \int_{-\sigma/2}^{\sigma/2}\biggl(\sum_{j\in\bZ}|f(j+x)|^2\biggr)d\nu(x)=\|f\|_{\cB_\delta}.
\end{equation}
It follows that each sampling function $\mu_j$ is a continuous linear functional on $\cB_\delta$. Thus, by the Riesz representation theorem, there exists some $g_j\in\cB_\delta$ such that
\begin{equation}\label{gj}
\mu_j(f)=\langle f,g_j\rangle_{L^2(\bR)},\ \ f\in\cB_\delta.
\end{equation}
Inequality (\ref{frameupper}) indicates that $\{g_j:j\in\bZ\}$ is a Bessel sequence (see, \cite{Young}, page 154) for $\cB_\delta$. Furthermore, we shall see that under our conditions on $\nu$, $g_j$ in fact constitutes a frame for $\cB_\delta$. In order to obtain a method of fast reconstructing $f\in\cB_\delta$ from the finite sample data $\mu_j(f)$, $|j|\le n$, we hope to find a dual frame for $\{g_j\}$ that is fast-decaying at infinity. The following lemma finds all the dual frames for $\{g_j\}$ formed by the integer shifts of a single function.

Introduce the exponential function
$$
W(\xi):=\int_{-\sigma/2}^{\sigma/2}e^{it\xi}d\nu(t),\ \ \xi\in\bR.
$$
Since $\nu$ is symmetric about the origin, $W$ is real-valued. By the condition (\ref{consgammacond}) that $\sigma\delta<\pi$, we have
\begin{equation}\label{Wlowerbound}
0<\gamma=\cos(\frac{\sigma\delta}2)\le W(\xi)\le 1,\ \ \xi\in[-\delta,\delta].
\end{equation}

\begin{lemma}\label{dualframe}
Let $\phi\in C(\bR)\cap L^2(\bR)$ with $\supp\hat{\phi}\subseteq [-2\pi+\delta,2\pi-\delta]$. Then the identity
\begin{equation}\label{exactexpansion}
f=\frac1{\sqrt{2\pi}}\sum_{j\in\bZ}\mu_j(f)\phi(\cdot-j)
\end{equation}
holds in $L^2(\bR)$ for all $f\in\cB_\delta$ if and only if
\begin{equation}\label{exactexpansioncond}
\hat{\phi}(\xi)W(\xi)=1,\ \ \mbox{for almost every }\xi\in[-\delta,\delta].
\end{equation}
\end{lemma}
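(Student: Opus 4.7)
The plan is to pass everything to the Fourier side and recognize the expansion as an identity between $\hat f$ and the product of $\hat\phi$ with a periodization of $\hat f \cdot W$.

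First I would compute an explicit Fourier representation of the sampling functionals. Interchanging the integrals in (\ref{averageform}) and using $\hat f \in L^1$ (since $\hat f$ is supported in $[-\delta,\delta]$), one obtains
\[
\mu_j(f)=\frac{1}{\sqrt{2\pi}}\int_{-\delta}^{\delta}\hat f(\xi)\,W(\xi)\,e^{ij\xi}\,d\xi,\qquad j\in\bZ.
\]
Since $\hat f\cdot W$ is supported in $[-\delta,\delta]\subset[-\pi,\pi]$, the numbers $\mu_j(f)$ are (up to the factor $\sqrt{2\pi}$) the Fourier coefficients of $\hat f\cdot W$, viewed as a function on the torus $[-\pi,\pi]$. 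By (\ref{frameupper}) they are square-summable, so the associated Fourier series converges in $L^2(-\pi,\pi)$.

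Next I would take the Fourier transform of the right-hand side of (\ref{exactexpansion}). Using $\widehat{\phi(\cdot-j)}(\xi)=e^{-ij\xi}\hat\phi(\xi)$ and the previous step,
\[
\widehat{\Bigl(\tfrac{1}{\sqrt{2\pi}}\sum_{j\in\bZ}\mu_j(f)\,\phi(\cdot-j)\Bigr)}(\xi)
=\hat\phi(\xi)\,\Phi(\xi),
\]
where $\Phi$ is the $2\pi$-periodic extension of $\hat f\cdot W$ from $[-\pi,\pi]$ to $\bR$; that is, $\Phi(\xi)=\sum_{k\in\bZ}\hat f(\xi-2\pi k)W(\xi-2\pi k)$. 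Thus the identity $f=\tfrac{1}{\sqrt{2\pi}}\sum_j \mu_j(f)\phi(\cdot-j)$ in $L^2(\bR)$ is equivalent to $\hat\phi(\xi)\Phi(\xi)=\hat f(\xi)$ almost everywhere on $\bR$.

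Now I would exploit the support hypothesis on $\hat\phi$. Because $\hat\phi$ vanishes outside $[-2\pi+\delta,2\pi-\delta]$, the condition $\hat\phi\,\Phi=\hat f$ is automatic outside this interval. Inside it, the only translate of $\hat f\cdot W$ that can be nonzero is the $k=0$ one: for $\xi\in[-2\pi+\delta,2\pi-\delta]$ and any $k\neq 0$, $|\xi-2\pi k|\geq 2\pi-(2\pi-\delta)=\delta$, so $\hat f(\xi-2\pi k)W(\xi-2\pi k)=0$. Hence $\Phi(\xi)=\hat f(\xi)W(\xi)$ on the support of $\hat\phi$, and the identity reduces to $\hat\phi(\xi)W(\xi)\hat f(\xi)=\hat f(\xi)$ a.e. Since $\hat f$ ranges over all of $L^2[-\delta,\delta]$ as $f$ varies over $\cB_\delta$, this is equivalent to $\hat\phi(\xi)W(\xi)=1$ a.e. on $[-\delta,\delta]$, and conversely this condition plus (\ref{Wlowerbound}) makes every step reversible.

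The main technical point to be careful about is the justification of the Fourier manipulations in $L^2$: identifying $\sum_j\mu_j(f)e^{-ij\xi}$ with the periodization $\sqrt{2\pi}\,\Phi(\xi)$ requires the Bessel bound (\ref{frameupper}) so that the series converges in $L^2(-\pi,\pi)$, and one then multiplies by the bounded (compactly supported) factor $\hat\phi$ to get convergence in $L^2(\bR)$, with the Plancherel identity transferring the equivalence back to the time domain.
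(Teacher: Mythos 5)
Your proof is correct and follows essentially the same route as the paper's: Fourier-transform the series, identify $\frac1{\sqrt{2\pi}}\sum_{j\in\bZ}\mu_j(f)e^{-ij\xi}$ with the $2\pi$-periodic extension of $\hat{f}W$ from $[-\pi,\pi]$, and use the support constraint on $\hat{\phi}$ to reduce the identity to $\hat{\phi}W=1$ a.e.\ on $[-\delta,\delta]$. The differences are cosmetic: the paper expands $\mu_j(f)$ inside the sum and recognizes $\frac1{\sqrt{2\pi}}\sum_{j\in\bZ}f(j+t)e^{-ij\xi}$ as the Fourier series of $\hat{f}e^{it\cdot}$ before integrating in $t$, whereas you apply Fubini first to read off $\mu_j(f)$ directly as Fourier coefficients of $\hat{f}W$; your explicit handling of the $k\neq 0$ translates and of the ``for all $f$'' step merely fills in details the paper leaves implicit.
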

\begin{proof}
Let $f\in\cB_\delta$ and $\phi\in C(\bR)\cap L^2(\bR)$ with $\supp\hat{\phi}\subseteq [-2\pi+\delta,2\pi-\delta]$. For simplicity, denote by $g$ the right hand side of (\ref{exactexpansion}). We compute that
\begin{equation}\label{dualframeeq1}
\hat{g}(\xi)=\hat{\phi}(\xi)\frac1{\sqrt{2\pi}}\sum_{j\in\bZ}\mu_j(f)e^{-ij\xi}=\hat{\phi}(\xi)\int_{-\sigma/2}^{\sigma/2}\biggl(\frac1{\sqrt{2\pi}}\sum_{j\in\bZ}f(j+t)e^{-ij\xi}\biggr)d\nu(t),\ \ \xi\in\bR.
\end{equation}
Observe that $\frac1{\sqrt{2\pi}}\sum_{j\in\bZ}f(j+t)e^{-ij\xi}$ is the expansion of $\hat{f}e^{it\cdot}$ with respect to the orthonormal basis $\{\frac1{\sqrt{2\pi}}e^{-ij\xi}:j\in\bZ\}$ in $L^2([-\pi,\pi])$. Combing this observation with (\ref{dualframeeq1}) yields
$$
\hat{g}(\xi)=\hat{\phi}(\xi)(\hat{f}(\xi)W(\xi))_{2\pi},\ \ \xi\in\bR,
$$
where the subindex $2\pi$ denotes the $2\pi$-periodic extension of a function originally defined only within $[-\pi,\pi]$. This Fourier transform $\hat{g}$ equals $\hat{f}$ for all $f\in\cB_\delta$ if and only if (\ref{exactexpansioncond}) holds.
\end{proof}

We remark that the arguments above imply that the functions $g_j$ determined by (\ref{gj}) form a frame for $\cB_\delta$. In fact, for $f\in\cB_\delta$,
$$
\sum_{j\in\bZ}|\mu_j(f)|^2=\frac1{2\pi}\int_{-\pi}^{\pi}\biggl|\sum_{j\in\bZ}\mu_j(f)e^{-ij\xi}\biggr|^2d\xi=\int_{-\delta}^{\delta}|\hat{f}(\xi)|^2|W(\xi)|^2d\xi.
$$
This together with (\ref{Wlowerbound}) implies
\begin{equation}\label{framecond}
\gamma \|f\|_{\cB_\delta}\le\biggl(\sum_{j\in\bZ}|\mu_j(f)|^2\biggr)^{1/2}\le \|f\|_{\cB_\delta}.
\end{equation}
Therefore, $\{g_j\}$ indeed is a frame for $\cB_\delta$ with frame bounds $\gamma$ and $1$.

We shall carefully choose a function $\phi\in C(\bR)\cap L^2(\bR)$ satisfying $\supp\hat{\phi}\subseteq[-2\pi+\delta,2\pi-\delta]$ and the condition (\ref{exactexpansioncond}). Our method of reconstructing the values of a function $f\in \cB_\delta$ on $(0,1)$ from its finite sample data $\mu_j(f)$, $-n\le j\le n$, is directly given as
\begin{equation}\label{reconstruction}
(\cA_nf)(x):=\frac1{\sqrt{2\pi}}\sum_{j=-n}^n \mu_j(f)\phi(x-j),\ \ x\in(0,1).
\end{equation}
We give an initial analysis of the approximation error of this method.

\begin{lemma}\label{truncated}
Let $\phi\in C(\bR)\cap L^2(\bR)$ satisfying $\supp\hat{\phi}\subseteq[-2\pi+\delta,2\pi-\delta]$ and (\ref{exactexpansioncond}). It holds for all $f\in\cB_\delta$ and $x\in(0,1)$ that
\begin{equation}\label{truncatederror}
|f(x)-(\cA_nf)(x)|\le \frac1{\sqrt{2\pi}}\|f\|_{L^2(\bR)}\biggl(\sum_{|j|>n}|\phi(x-j)|^2\biggr)^{1/2}.
\end{equation}
\end{lemma}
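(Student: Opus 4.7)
The plan is to apply the exact reconstruction formula from Lemma \ref{dualframe} pointwise on $(0,1)$, subtract the truncated sum $(\cA_nf)(x)$, and estimate the remaining tail by Cauchy--Schwarz together with the upper frame inequality already established in (\ref{frameupper}) (equivalently the upper bound in (\ref{framecond})).

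More concretely, I would first invoke Lemma \ref{dualframe} to write, for $x\in(0,1)$,
$$
f(x)=\frac1{\sqrt{2\pi}}\sum_{j\in\bZ}\mu_j(f)\phi(x-j),
$$
so that
$$
f(x)-(\cA_nf)(x)=\frac1{\sqrt{2\pi}}\sum_{|j|>n}\mu_j(f)\phi(x-j).
$$
Applying the Cauchy--Schwarz inequality to this tail gives
$$
|f(x)-(\cA_nf)(x)|\le\frac1{\sqrt{2\pi}}\Bigl(\sum_{|j|>n}|\mu_j(f)|^2\Bigr)^{1/2}\Bigl(\sum_{|j|>n}|\phi(x-j)|^2\Bigr)^{1/2}.
$$
The first factor is bounded by $\bigl(\sum_{j\in\bZ}|\mu_j(f)|^2\bigr)^{1/2}\le\|f\|_{\cB_\delta}=\|f\|_{L^2(\bR)}$ by (\ref{frameupper})--(\ref{framecond}), and plugging this in gives exactly the claimed estimate (\ref{truncatederror}).

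The one non-routine step is the upgrade from $L^2$-equality, which is all Lemma \ref{dualframe} provides, to the pointwise identity needed on $(0,1)$; this is the main (mild) obstacle. I would handle it by exploiting that $\hat\phi$ is compactly supported, so that $\phi\in C(\bR)\cap L^2(\bR)$ and the integer shifts $\{\phi(\cdot-j)\}_{j\in\bZ}$ form a Bessel system with $\sup_{x\in\bR}\sum_{j\in\bZ}|\phi(x-j)|^2<\infty$. Combined with $\{\mu_j(f)\}\in\ell^2$, a further Cauchy--Schwarz estimate on the truncation error shows that the partial sums of $\frac1{\sqrt{2\pi}}\sum_{j}\mu_j(f)\phi(\cdot-j)$ converge uniformly on $\bR$ to a continuous function, which must then coincide with the continuous function $f$ at every point. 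Once pointwise equality is in hand, the rest is the one-line truncation argument above.
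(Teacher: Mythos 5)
Your proposal is correct and follows essentially the same route as the paper: invoke Lemma \ref{dualframe} for the $L^2$ identity, upgrade it to pointwise equality via uniform convergence of the series (the paper cites (\ref{framecond}) and the Parseval identity for this, which amounts to the same Bessel-type bound $\sup_x\sum_j|\phi(x-j)|^2<\infty$ you use), and then estimate the tail by Cauchy--Schwarz together with the upper frame bound in (\ref{framecond}). No gaps; this matches the paper's argument step for step.
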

\begin{proof}
Under the assumptions, (\ref{exactexpansion}) holds in $L^2(\bR)$ for all $f\in \cB_\delta$. It is straightforward to show by (\ref{framecond}) and the Parseval identity that the series on the right hand side of (\ref{exactexpansion}) converges uniformly on $\bR$. Thus, it defines a continuous function on $\bR$. Since $f\in \cB_\delta\subseteq C(\bR)$, that (\ref{exactexpansion}) holds in $L^2(\bR)$ implies that it also holds true pointwise on $\bR$. Thus, we have for all $x\in(0,1)$
$$
|f(x)-(\cA_nf)(x)|=\frac1{\sqrt{2\pi}}\biggl|\sum_{|j|>n}\mu_j(f)\phi(x-j)\biggr|.
$$
By the Cauchy-Schwartz inequality and the frame property (\ref{framecond}), we get
$$
|f(x)-(\cA_nf)(x)|\le \frac1{\sqrt{2\pi}}\biggl(\sum_{|j|>n}|\mu_j(f)|^2\biggr)^{1/2}\biggl(\sum_{|j|>n}|\phi(x-j)|^2\biggr)^{1/2}\le \frac1{\sqrt{2\pi}}\|f\|_{\cB_\delta}\biggl(\sum_{|j|>n}|\phi(x-j)|^2\biggr)^{1/2},
$$
which completes the proof.
\end{proof}

By (\ref{truncatederror}), $\phi$ should be fast-decaying at infinity. We make use of a well-known property of the Fourier transform that if $\hat{\phi}$ has $k-1$ continuous derivatives and $(\hat{\phi})^{(k-1)}$ is absolutely continuous then
\begin{equation}\label{fourierdecaying}
|\phi(x-j)|\le \frac1{\sqrt{2\pi}}\frac{\|\hat{\phi}^{(k)}\|_{L^1([-2\pi+\delta,2\pi-\delta])}}{|x-j|^k},\ \ j\ge 1,\ x\in(0,1).
\end{equation}
We shall choose a $\phi$ so that the important quantity $\|\hat{\phi}^{(k)}\|_{L^1([-2\pi+\delta,2\pi-\delta])}$ is minimized. Considering (\ref{exactexpansioncond}) and that $W$ is an even function, we would like $\hat{\phi}$ to be even as well. Thus,
\begin{equation}\label{l1normsum}
\|(\hat{\phi})^{(k)}\|_{L^1([-2\pi+\delta,2\pi-\delta])}=\left\| (\frac1{W})^{(k)}\right\|_{L^1([-\delta,\delta])}+2\|\hat{\phi}^{(k)}\|_{L^1([\delta,2\pi-\delta])}.
\end{equation}

Denote for all intervals $[a,b]$ by $\cF_k[a,b]$ the class of functions $\varphi\in C^{(k-1)}([a,b])$ with $\varphi^{(k-1)}$ being absolutely continuous. Our task is to extend the values of $1/W$ on $[-\delta,\delta]$ to an even function $\hat{\phi}\in \cF_k[-2\pi+\delta,2\pi-\delta]$ with $\supp\hat{\phi}\subseteq [-2\pi+\delta,2\pi-\delta]$ such that the $L^1$ norm of its $k$-th derivative is minimized. We formulate this minimization problem below.

Set
\begin{equation}\label{djs}
d_j:=(\frac1W)^{(j)}(\delta),\ \ 0\le j\le k-1.
\end{equation}
To extend $ 1/W$ on $[-\delta,\delta]$ to an even function $\hat{\phi}\in \cF_k[-2\pi+\delta,2\pi-\delta]$, we are looking for a function $\hat{\phi}\in \cF_k[\delta,2\pi-\delta]$ such that
\begin{equation}\label{deltajoint}
\hat{\phi}^{(j)}(\delta)=d_j,\ \hat{\phi}^{(j)}(2\pi-\delta)=0,\ \ 0\le j\le k-1.
\end{equation}
Thus, we want to solve
$$
\inf_{\hat{\phi}\in \cF_k[\delta,2\pi-\delta]}\|\hat{\phi}^{(k)}\|_{L^1([\delta,2\pi-\delta])}
$$
subject to the condition (\ref{deltajoint}). This turns out to be hard to solve due to the nature of $L^1$ norm. Since
\begin{equation}\label{cauchyschwarz12}
\|\hat{\phi}^{(k)}\|_{L^1([\delta,2\pi-\delta])}\le \sqrt{2\pi-2\delta}\|\hat{\phi}^{(k)}\|_{L^2([\delta,2\pi-\delta])}.
\end{equation}
we shall try to find
\begin{equation}\label{Vkoriginal}
V_k:=\inf_{\phi\in \tilde{\cF}_k[\delta,2\pi-\delta]}\sqrt{2\pi-2\delta}\|\hat{\phi}^{(k)}\|_{L^2([\delta,2\pi-\delta])},
\end{equation}
where $\tilde{\cF}_k[a,b]$ denotes the class of functions $\varphi\in \cF_k[a,b]$ with $\varphi^{(k)}\in L^2([a,b])$. Through a change of variables
\begin{equation}\label{changevariables}
\psi(t):=\hat{\phi}(2\pi-\delta-(2\pi-2\delta)t), \ \ t\in[0,1],
\end{equation}
we observe that
\begin{equation}\label{Vk}
V_k=\frac1{(2\pi-2\delta)^{k-1}}\inf_{\psi\in \tilde{\cF}_k[0,1]}\|\psi^{(k)}\|_{L^2([0,1])}
\end{equation}
subject to
\begin{equation}\label{deltajoint2}
\psi^{(j)}(0)=0,\ \psi^{(j)}(1)=d_j':=(-1)^j(2\pi-2\delta)^jd_j,\ \ 0\le j\le k-1.
\end{equation}

A major technical part of this section is to solve this minimization problem. To this end, we first give an integral reformulation of the restriction condition (\ref{deltajoint2}).

\begin{lemma}\label{extension}
Let $\psi\in\tilde{\cF}_k[0,1]$ with $\psi^{(j)}(0)=0$, $1\le j\le k-1$. Then it satisfies
\begin{equation}\label{deltajoint2second}
\psi^{(j)}(1)=d_j',\ \ 0\le j\le k-1
\end{equation}
if and only if
\begin{equation}\label{integralcond}
\int_0^1 \psi^{(k)}(t)t^jdt=q_j,\ \ 0\le j\le k-1,
\end{equation}
where
\begin{equation}\label{qjs}
q_j:=d_{k-1}'+\sum_{l=1}^j(-1)^l\frac{j!}{(j-l)!}d_{k-l-1}',\ \ 0\le j\le k-1.
\end{equation}
\end{lemma}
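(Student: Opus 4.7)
The plan is to establish the equivalence by iterated integration by parts. In the forward direction I would reduce each moment $\int_0^1 \psi^{(k)}(t) t^j dt$ to a linear combination of boundary values of $\psi$ and its lower derivatives; in the reverse direction I would recover those boundary values from the moments via a triangular induction.

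For the implication $(\ref{deltajoint2second}) \Rightarrow (\ref{integralcond})$, I fix $j$ with $0 \le j \le k-1$ and integrate by parts $j+1$ times, each round differentiating the polynomial factor and antidifferentiating the $\psi$-derivative. After $l$ rounds the surviving boundary contribution is $(-1)^l \frac{j!}{(j-l)!}\left[t^{j-l}\psi^{(k-l-1)}(t)\right]_0^1$. At $t=1$ this equals $(-1)^l\frac{j!}{(j-l)!}\psi^{(k-l-1)}(1)$; at $t=0$ it vanishes, because for $l<j$ the monomial factor $t^{j-l}$ kills the endpoint, and for $l=j$ the hypothesis $\psi^{(i)}(0)=0$ forces $\psi^{(k-j-1)}(0)=0$. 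Substituting $\psi^{(k-l-1)}(1)=d_{k-l-1}'$ and isolating the $l=0$ summand as $d_{k-1}'$ reproduces exactly the expression (\ref{qjs}) for $q_j$.

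For the converse, I would induct on $j$, reusing the same IBP identity but reading it as an equation for the boundary values in terms of the moments. The base case $j=0$ reads $q_0 = \int_0^1 \psi^{(k)} dt = \psi^{(k-1)}(1)$, giving $\psi^{(k-1)}(1) = d_{k-1}'$. Assuming $\psi^{(k-l-1)}(1)=d_{k-l-1}'$ for all $l<j$, the IBP expansion of the $j$-th moment equation leaves $\psi^{(k-j-1)}(1)$ as the sole unknown, appearing with coefficient $(-1)^j j!$; solving and comparing to the explicit form of $q_j$ yields $\psi^{(k-j-1)}(1)=d_{k-j-1}'$, closing the induction step.

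The argument is essentially combinatorial: no serious analytic obstacle arises, since $\psi^{(k)} \in L^2([0,1]) \subset L^1([0,1])$ makes every integral well-defined and the absolute continuity of $\psi^{(k-1)}$ legitimizes each integration by parts. The main care will be the bookkeeping of alternating signs and falling factorials across the $j+1$ boundary terms, and invoking the initial conditions at $t=0$ precisely at the one step where the polynomial factor fails to vanish on its own.
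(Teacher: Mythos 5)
Your proof is correct and follows essentially the same route as the paper's: both directions come down to integration by parts plus a triangular induction in $j$, the only organizational difference being that the paper runs the converse through the iterated-integral representation $\psi(t)=\int_0^t d\tau_1\int_0^{\tau_1}\cdots\int_0^{\tau_{k-1}}\psi^{(k)}(\tau_k)\,d\tau_k$ and then checks a leftover combinatorial identity, whereas you reuse the forward IBP identity and let the definition of $q_j$ cancel everything except the single unknown $\psi^{(k-j-1)}(1)$, which is slightly cleaner bookkeeping. One shared caveat: at the step $l=j=k-1$ your argument needs $\psi(0)=0$, which the stated hypothesis ($1\le j\le k-1$) does not literally provide; this is a typo in the lemma rather than a flaw in your reasoning (the intended condition (\ref{deltajoint2}) includes $j=0$, and without it the lemma is false, e.g.\ for constant $\psi$), and the paper's own proof relies on $\psi(0)=0$ just as silently when it writes $\psi$ as the $k$-fold integral of $\psi^{(k)}$.
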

\begin{proof}
Let $\psi\in\tilde{\cF}_k[0,1]$  with $\psi^{(j)}(0)=0$, $1\le j\le k-1$. Suppose that it satisfies (\ref{deltajoint2second}). One proves (\ref{integralcond}) by induction and by integration by parts. Conversely, suppose (\ref{integralcond}) holds true. We first see
$$
\psi(t)=\int_0^td\tau_1\int_0^{\tau_1}d\tau_2\cdots\int_0^{\tau_{k-1}}\psi^{(k)}(\tau_k)d\tau_k, \ \ t\in[0,1].
$$
Thus
$$
\psi^{(k-1)}(1)=\int_0^1\psi^{(k)}(\tau_k)d\tau_k=q_0=d_{k-1}'.
$$
Suppose $\psi^{(l)}(1)=d_l'$ for $k-j+1\le l\le k-1$. We use two steps of integration by parts to get
$$
\begin{array}{ll}
\psi^{(k-j)}(1)&\displaystyle{=\int_0^1d\tau_{k-j+1}\int_0^{\tau_{k-j+1}}d\tau_{k-j+2}\cdots\int_0^{\tau_{k-1}}\psi^{(k)}(\tau_k)d\tau_k}\\
&\displaystyle{=\int_0^1d\tau_{k-j+2}\cdots\int_0^{\tau_{k-1}}\psi^{(k)}(\tau_k)d\tau_k-\int_0^1\tau_{k-j+1}\int_0^{\tau_{k-j+1}}d\tau_{k-j+2}\cdots\int_0^{\tau_{k-1}}\psi^{(k)}(\tau_k)d\tau_k}\\
&\displaystyle{=d_{k-j+1}'-\int_0^1\tau_{k-j+1}\int_0^{\tau_{k-j+1}}d\tau_{k-j+2}\cdots\int_0^{\tau_{k-1}}\psi^{(k)}(\tau_k)d\tau_k}\\
&\displaystyle{=d_{k-j+1}'-\frac12d_{k-j+2}'+\frac12\int_0^1\tau_{k-j+2}^2\int_0^{\tau_{k-j+2}}d\tau_{k-j+3}\cdots\int_0^{\tau_{k-1}}\psi^{(k)}(\tau_k)d\tau_k}.
\end{array}
$$
Successively using integration by parts, we obtain
$$
\psi^{(k-j)}(1)=d_{k-j+1}'+\sum_{l=1}^{j-2}\frac{(-1)^l}{(l+1)!}d_{k-j+1+l}'+\frac{(-1)^{j-1}}{(j-1)!}\int_0^1\tau^{j-1}\psi^{(k)}(\tau)d\tau.
$$
Substituting (\ref{integralcond}) into the above equation yields
$$
\psi^{(k-j)}(1)=d_{k-j+1}'+\sum_{l=1}^{j-2}\frac{(-1)^l}{(l+1)!}d_{k-j+1+l}'+\frac{(-1)^{j-1}}{(j-1)!}q_{j-1}.
$$
One verifies that the right hand side above does equal $d_{k-j}'$.
\end{proof}

Using Lemma \ref{extension}, we are able to solve the minimization problem (\ref{Vk}).
\begin{lemma}\label{vkhilbert}
Let $V_k$ be given by (\ref{Vkoriginal}). Then
$$
V_k=\frac{(q^TH_k^{-1}q)^{1/2}}{(2\pi-2\delta)^{k-1}},
$$
where $q=(q_j:0\le j\le k-1)^T$ and $H_k$ is the $k\times k$ Hilbert matrix
$$
H_k(i,j):=\frac1{i+j+1},\ \ 0\le i,j\le k-1.
$$
Moreover, the function $\hat{\phi}$ that attains the infimum (\ref{Vkoriginal}) is uniquely determined by
\begin{equation}\label{minimizer}
\hat{\phi}^{(k)}(\xi):=\frac{(-1)^k}{(2\pi-2\delta)^k}\sum_{j=0}^{k-1}(H_k^{-1}q)_j\left(\frac{2\pi-\delta-\xi}{2\pi-2\delta}\right)^j,\ \ \xi\in [\delta,2\pi-\delta].
\end{equation}
\end{lemma}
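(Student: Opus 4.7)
The plan is to reduce the stated minimization to a standard least-squares problem in $L^2([0,1])$ with finitely many linear (moment) constraints, and then solve it by Hilbert space projection.

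First, by Lemma \ref{extension} and the change-of-variables identity (\ref{Vk}), minimizing $\sqrt{2\pi-2\delta}\,\|\hat\phi^{(k)}\|_{L^2([\delta,2\pi-\delta])}$ is equivalent, after setting $g := \psi^{(k)}\in L^2([0,1])$, to minimizing $\|g\|_{L^2([0,1])}$ over $g\in L^2([0,1])$ subject to the moment constraints
$$
\int_0^1 g(t)\,t^j\,dt=q_j,\qquad 0\le j\le k-1.
$$
I would first observe that every $g\in L^2([0,1])$ satisfying these constraints recovers a unique admissible $\psi\in\tilde{\cF}_k[0,1]$ via $\psi(t)=\int_0^t\frac{(t-\tau)^{k-1}}{(k-1)!}g(\tau)\,d\tau$, so that the $\psi^{(j)}(0)=0$ conditions hold automatically and Lemma \ref{extension} guarantees $\psi^{(j)}(1)=d_j'$.

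Next, I would solve the constrained $L^2$-minimization. Write $\Lambda_j(g):=\int_0^1 g(t)t^j\,dt$; these are bounded linear functionals on $L^2([0,1])$, and the constraint set $\{g:\Lambda_j(g)=q_j,\ 0\le j\le k-1\}$ is an affine subspace with Riesz representers $t^j$. By the Hilbert space projection theorem, the unique minimum-norm element lies in $\mathrm{span}\{1,t,\dots,t^{k-1}\}$, so write
$$
g^*(t)=\sum_{j=0}^{k-1}c_j\,t^j.
$$
Imposing $\Lambda_i(g^*)=q_i$ for $0\le i\le k-1$ gives $\sum_{j} \frac{c_j}{i+j+1}=q_i$, i.e.\ $H_k c=q$. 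Since $H_k$ is the Hilbert matrix, it is symmetric positive definite and hence invertible, so $c=H_k^{-1}q$. The minimum value is
$$
\|g^*\|_{L^2([0,1])}^2=\sum_{i,j}c_ic_j\int_0^1 t^{i+j}\,dt=c^T H_k c=q^T H_k^{-1}q.
$$

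Finally, I would translate back through the change of variables (\ref{changevariables}). From $\psi^{(k)}(t)=(-1)^k(2\pi-2\delta)^k\hat\phi^{(k)}(2\pi-\delta-(2\pi-2\delta)t)$ one gets
$$
\|\psi^{(k)}\|_{L^2([0,1])}=(2\pi-2\delta)^{k-1/2}\,\|\hat\phi^{(k)}\|_{L^2([\delta,2\pi-\delta])},
$$
which together with $V_k=(2\pi-2\delta)^{-(k-1)}\inf\|\psi^{(k)}\|_{L^2([0,1])}$ yields the claimed formula $V_k=(q^TH_k^{-1}q)^{1/2}/(2\pi-2\delta)^{k-1}$. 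Solving $\hat\phi^{(k)}$ from the same identity with $t=(2\pi-\delta-\xi)/(2\pi-2\delta)$ and $g^*(t)=\sum_j(H_k^{-1}q)_j t^j$ gives exactly (\ref{minimizer}).

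There is no substantive obstacle: the only subtlety is the bookkeeping of the affine change of variables and verifying that the parametrization $\psi^{(k)}=g$ is actually a bijection between the admissible class $\{\psi\in\tilde{\cF}_k[0,1]:\psi^{(j)}(0)=0,\ \psi^{(j)}(1)=d_j'\}$ and $\{g\in L^2([0,1]):\Lambda_j(g)=q_j\}$, so that one really can transport the infimum and the minimizer across Lemma \ref{extension} without losing existence or uniqueness.
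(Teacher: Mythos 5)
Your proposal is correct and follows essentially the same route as the paper's proof: reduce via Lemma \ref{extension} and (\ref{Vk}) to minimizing $\|g\|_{L^2([0,1])}$ under the moment constraints, solve by orthogonal projection to get the polynomial minimizer with coefficients $c=H_k^{-1}q$ and value $q^TH_k^{-1}q$, then undo the change of variables (\ref{changevariables}) to obtain (\ref{minimizer}). Your extra care in exhibiting the bijection $\psi(t)=\int_0^t\frac{(t-\tau)^{k-1}}{(k-1)!}g(\tau)\,d\tau$ between admissible $\psi$ and admissible $g$ is a point the paper leaves implicit, but it is not a different method.
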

\begin{proof}
By equation (\ref{Vk}) and Lemma \ref{extension},
$$
V_k=\frac1{(2\pi-2\delta)^{k-1}}\inf_{g\in L^2([0,1])}\|g\|_{L^2([0,1])}
$$
subject to
\begin{equation}\label{integralcond2}
\int_0^1 g(t)t^jdt=q_j,\ \ 0\le j\le k-1.
\end{equation}
By the orthogonal decomposition in a Hilbert space, the minimizer $g$ of the above minimization problem is unique and must be of the form
$$
g(t)=\sum_{j=0}^{k-1}c_jt^j,\ \ 0\le t\le 1.
$$
The above polynomial satisfies (\ref{integralcond2}) if and only if the coefficient vector $c=(c_j:0\le j\le k-1)^T$ satisfies
$$
H_kc=q.
$$
Let $c:=H_k^{-1}q$. We compute that
$$
\biggl\|\sum_{j=0}^{k-1}c_jt^j\biggr\|^2_{L^2([0,1])}=c^TH_kc=q^TH_k^{-1}q.
$$
The only minimizer $\psi\in\tilde{\cF}_k[0,1]$ that attains (\ref{Vk}) is hence given by $\psi^{(k)}=g$. By (\ref{changevariables}), there exists a unique minimizer $\hat{\phi}$ for (\ref{Vkoriginal}), which is given by (\ref{minimizer}).
\end{proof}

Let $\hat{\phi}$ be determined by
\begin{equation}\label{optimalphi}
\hat{\phi}(\xi)=\frac1{W(\xi)},\ |\xi|\le \delta\mbox{ and }\hat{\phi}^{(k)}(\xi)=\frac{(-1)^k}{(2\pi-2\delta)^k}\sum_{j=0}^{k-1}(H_k^{-1}q)_j\left(\frac{2\pi-\delta-|\xi|}{2\pi-2\delta}\right)^j,\ \ \  \delta\le |\xi|\le 2\pi-\delta.
\end{equation}
Finally, we shall use the following well-known fact about the smallest eigenvalue $\rho_{min}(H_k)$ of the Hilbert matrix $H_k$
\begin{equation}\label{smalleigenvalue}
\left(\frac1 {\rho_{min}(H_k)}\right)^{1/2}\le \frac {C_H}{k^{1/4}}(1+\sqrt{2})^{2k},
\end{equation}
where $C_H$ is a constant independent of $k$ (see, \cite{Wilf}, page 51).

We are ready to prove the approximation error for the reconstruction method (\ref{reconstruction}). 

\begin{theorem}\label{main}
Let $\phi$ be defined by (\ref{optimalphi}) with
\begin{equation}\label{choice}
k:=1+\lfloor \frac{n}{\beta e}\rfloor,\ \
\beta:=(1+\sqrt{2})^2\frac{\gamma+\sigma(\pi-\delta)}{2\gamma(\pi-\delta)}.
\end{equation}
Then for all $f\in\cB_\delta$, $x\in(0,1)$, and $n\in\bN$ satisfying $n\ge \beta e$ and
\begin{equation}\label{kcondition}
\frac{e\sigma\delta}{\gamma}k^{3/4}\le 4(1+\sqrt{2})^{2k},
\end{equation}
the following inequality holds
\begin{equation}\label{ultimateestimate}
|f(x)-(\cA_nf)(x)|\le\|f\|_{L^2(\bR)}\frac{4(1+\sqrt{2})^2C_H}{\gamma\pi}(\frac2\beta)^{1/4}e^{3/4}\sqrt{1+2\beta e}\frac1{n^{3/4}}\exp(-\frac{n}{\beta e}).
\end{equation}
\end{theorem}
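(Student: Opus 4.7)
The starting point is Lemma \ref{truncated}, which reduces the task to controlling the tail sum $\bigl(\sum_{|j|>n}|\phi(x-j)|^2\bigr)^{1/2}$. Since $x\in(0,1)$ and $|j|>n$, we have $|x-j|\ge |j|-1\ge n$, and an integral comparison gives
$$
\sum_{|j|>n}\frac1{|x-j|^{2k}}\le \frac{C'}{(2k-1)\,n^{2k-1}}
$$
for some absolute constant $C'$. Invoking the Fourier decay bound (\ref{fourierdecaying}) then yields
$$
\biggl(\sum_{|j|>n}|\phi(x-j)|^2\biggr)^{1/2}\le \frac{1}{\sqrt{2\pi}}\,\|\hat\phi^{(k)}\|_{L^1([-2\pi+\delta,2\pi-\delta])}\,\Bigl(\frac{C'}{(2k-1)\,n^{2k-1}}\Bigr)^{1/2}.
$$
So the heart of the matter is to bound $\|\hat\phi^{(k)}\|_{L^1([-2\pi+\delta,2\pi-\delta])}$ for the explicit $\hat\phi$ of (\ref{optimalphi}), then to choose $k$ so as to balance $1/n^{k-1/2}$ against the growth of this $L^1$ norm in $k$.

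\smallskip

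By the splitting (\ref{l1normsum}), $\|\hat\phi^{(k)}\|_{L^1([-2\pi+\delta,2\pi-\delta])}$ breaks into (i) the interior piece $\|(1/W)^{(k)}\|_{L^1([-\delta,\delta])}$ and (ii) twice the extension piece $\|\hat\phi^{(k)}\|_{L^1([\delta,2\pi-\delta])}$. For (ii), the Cauchy--Schwarz bound (\ref{cauchyschwarz12}) together with Lemma \ref{vkhilbert} gives
$$
\|\hat\phi^{(k)}\|_{L^1([\delta,2\pi-\delta])}\le V_k = \frac{(q^TH_k^{-1}q)^{1/2}}{(2\pi-2\delta)^{k-1}} \le \frac{\|q\|}{(2\pi-2\delta)^{k-1}\,\sqrt{\rho_{\min}(H_k)}},
$$
so by (\ref{smalleigenvalue}) it remains to bound $\|q\|$. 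From (\ref{qjs}), $|q_j|\le \sum_{l=0}^{j}\binom{j}{l}l!\,|d_{k-1-l}'|$ and $|d_j'|=(2\pi-2\delta)^j|d_j|$, so everything reduces to estimates on the derivatives $(1/W)^{(j)}(\delta)$ for $0\le j\le k$. For (i), the same derivative bounds give an immediate estimate on the interior $L^1$ norm.

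\smallskip

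The main technical obstacle is therefore controlling $|(1/W)^{(j)}(\delta)|$ and $\|(1/W)^{(k)}\|_{L^1([-\delta,\delta])}$. I would proceed by Leibniz-differentiating the identity $W\cdot (1/W)=1$ to get the recursion
$$
(1/W)^{(k)}=-\frac{1}{W}\sum_{j=0}^{k-1}\binom{k}{j}(1/W)^{(j)}W^{(k-j)},
$$
then use $\|1/W\|_\infty\le 1/\gamma$ on $[-\delta,\delta]$ and $|W^{(l)}(\xi)|\le (\sigma/2)^l$ (immediate from $W(\xi)=\int e^{it\xi}d\nu$ and $\supp\nu\subseteq[-\sigma/2,\sigma/2]$) to run an induction of the form $|(1/W)^{(j)}(\xi)|\le \frac{j!}{\gamma}\bigl(\frac{\sigma}{2\gamma}\bigr)^{j}\cdot(\mathrm{something})$ on $[-\delta,\delta]$. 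The precise exponential-in-$j$ factor produced here is what eventually combines with $(2\pi-2\delta)^{j}$ and with $(1+\sqrt{2})^{2k}$ from (\ref{smalleigenvalue}) to give the constant $\beta$ in (\ref{choice}): the group $\gamma+\sigma(\pi-\delta)$ is the natural sum one expects from $(2\pi-2\delta)^j(\sigma/(2\gamma))^j$ balancing against the other ingredients.

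\smallskip

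Having assembled a bound of the rough shape $\|\hat\phi^{(k)}\|_{L^1}\le C_1\,(1+\sqrt{2})^{2k}(2\beta(\pi-\delta))^k k^{-1/4}$, the final step is optimization in $k$. Plugging into the tail estimate produces an expression of the form $A(k)\cdot(B/n)^{k-1/2}$ where $B$ collects all the $k$-independent constants; Stirling's formula shows that the quantity $\bigl(\frac{B}{n}\bigr)^{k}\cdot k!$ (or rather the directly appearing product) is minimized near $k\sim n/(Be)$, which is exactly the choice (\ref{choice}). The auxiliary hypothesis (\ref{kcondition}) is precisely what is needed to ensure that the interior contribution $\|(1/W)^{(k)}\|_{L^1([-\delta,\delta])}$ is dominated by (twice) the extension contribution, so that only the latter governs the final bound. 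Collecting the $n$-dependent factors $n^{1/2-k}/\sqrt{k}$, substituting $k\approx n/(\beta e)$, and carefully tracking the $\sqrt{1+2\beta e}$ that comes from the ceiling $k-1\ge n/(\beta e)-1$ yields (\ref{ultimateestimate}). I expect the bookkeeping of constants at this last step to be the fiddliest part of the proof, while the conceptual heart is the derivative estimate for $1/W$ described above.
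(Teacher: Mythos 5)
Your proposal follows essentially the same route as the paper's own proof: Lemma \ref{truncated} plus the Fourier decay bound (\ref{fourierdecaying}), the splitting (\ref{l1normsum}) with the extension piece controlled via Lemma \ref{vkhilbert}, the eigenvalue bound (\ref{smalleigenvalue}) and an estimate of $\|q\|$, the Leibniz recursion for $(1/W)^{(k)}$ with $|W^{(l)}|\le(\sigma/2)^l$ handled by induction (the paper's induction yields $p_j\le\frac{\sigma^j}{2^j\gamma^{j+1}}j^j$, matching your ``$j!$ times an exponential factor'' placeholder), the role of (\ref{kcondition}) in letting the extension term dominate the interior term, and the choice $k=1+\lfloor n/(\beta e)\rfloor$ with the $\sqrt{1+2\beta e}$ bookkeeping. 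The plan is correct and structurally identical to the paper's argument; only the explicit constants and inductive computations remain to be filled in.
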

\begin{proof}
We shall use the estimate (\ref{truncatederror}). To this end, {\bf we first bound the $L^1$ norm of $\hat{\phi}^{(k)}$}. By equations (\ref{l1normsum}) and (\ref{cauchyschwarz12}), and Lemma \ref{vkhilbert},
\begin{equation}\label{phil1normbound}
\|\hat{\phi}^{(k)}\|_{L^1([-2\pi+\delta,2\pi-\delta])}\le \left\| (\frac1{W})^{(k)}\right\|_{L^1([-\delta,\delta])}+2V_k=\left\| (\frac1{W})^{(k)}\right\|_{L^1([-\delta,\delta])}+2\frac{(q^TH_k^{-1}q)^{1/2}}{(2\pi-2\delta)^{k-1}}.
\end{equation}

{\bf We start with the $L^1$ norm of $(\frac1{W})^{(k)}$}. Let
$$
p_j:=\left\| (\frac1{W})^{(j)}\right\|_{L^\infty([-\delta,\delta])},\ \ 0\le j\le k-1.
$$
Recall the two constants $\gamma$, $\sigma$ and equation (\ref{Wlowerbound}). Another simple fact to be used is
\begin{equation}\label{boundwj}
|W^{(j)}(\xi)|=\left|\int_{-\sigma/2}^{\sigma/2}(it)^je^{it\xi}d\nu(t)\right|\le \frac{\sigma^j}{2^j},\ \ 0\le j\le k-1.
\end{equation}
We shall prove by induction that
\begin{equation}\label{linfty1w}
p_j\le \frac{\sigma^j}{2^j\gamma^{j+1}}j^j,\ \ 0\le j\le k-1.
\end{equation}
Here $0^0:=1$. Clearly, this is true for $j=0$ by (\ref{Wlowerbound}). Suppose it holds true for $j\le k-1$. Set $h:=1/W$. We apply the Leibniz formula to compute the $k$-th derivatives of both sides of $hW=1$ to get
$$
h^{(k)}=-\frac1W\sum_{j=0}^{k-1}{{k}\choose{j}}h^{(j)}W^{(k-j)}.
$$
Thus, by induction on $p_j$, equations (\ref{Wlowerbound}) and (\ref{boundwj}),
$$
p_k\le \frac1{\gamma}\sum_{j=0}^{k-1}{{k}\choose{j}}\frac{\sigma^j}{2^j\gamma^{j+1}}j^j\frac{\sigma^{k-j}}{2^{k-j}}\le\frac{\sigma^k}{2^k\gamma^{k+1}}
\sum_{j=0}^{k}{{k}\choose{j}}(k-1)^j=\frac{\sigma^k}{2^k\gamma^{k+1}}k^k,
$$
which proves (\ref{linfty1w}). As a direct consequence,
\begin{equation}\label{l1norm1w}
\left\| (\frac1{W})^{(k)}\right\|_{L^1([-\delta,\delta])}\le 2\delta p_k=2\delta\frac{\sigma^k}{2^k\gamma^{k+1}}k^k.
\end{equation}

{\bf We next estimate $q^TH_k^{-1}q$}. Obviously,
$$
q^TH_k^{-1}q\le \frac{\|q\|^2}{\rho_{min}(H_k)},
$$
where $\|q\|$ is the standard Euclidean norm of $q$. By (\ref{smalleigenvalue}),
\begin{equation}\label{qnormestimate}
(q^TH_k^{-1}q)^{1/2}\le \frac{C_H}{k^{1/4}}(1+\sqrt{2})^{2k}\|q\|.
\end{equation}
We need to bound $\|q\|$. By (\ref{qjs}), (\ref{linfty1w}), and $|d_j|\le p_j$, we get for $0\le j\le k-1$,
$$
\begin{array}{ll}
q_j&\displaystyle{\le |d_{k-1}'|+\sum_{l=1}^j{{j}\choose{l}}l!|d_{k-l-1}'|}\\
&\displaystyle{\le (\pi-\delta)^{k-1}\frac{\sigma^{k-1}}{
\gamma^k}(k-1)^{k-1}+\sum_{l=1}^j{{j}\choose{l}}(\pi-\delta)^{k-l-1}\frac{\sigma^{k-l-1}}{\gamma^{k-l}}(k-l-1)^{k-l-1}l!}
\end{array}
$$
Using $(k-l-1)^{k-l-1}l!\le (k-1)^{k-2}$, we have
$$
\sum_{l=1}^j{{j}\choose{l}}(\pi-\delta)^{k-l-1}\frac{\sigma^{k-l-1}}{\gamma^{k-l}}(k-l-1)^{k-l-1}l!\le (\pi-\delta)^{k-1}\frac{\sigma^{k-1}}{
\gamma^k}(k-1)^{k-2}\biggl(1+\frac{\gamma}{\sigma(\pi-\delta)}\biggr)^j.
$$
By the above two equations, we estimate that
\begin{equation}\label{qnorm}
\|q\|\le 2\sqrt{k}(\pi-\delta)^{k-1}\frac{\sigma^{k-1}}{
\gamma^k}(k-1)^{k-1}\biggl(1+\frac{\gamma}{\sigma(\pi-\delta)}\biggr)^{k-1}.
\end{equation}

We combine (\ref{phil1normbound}), (\ref{l1norm1w}), (\ref{qnormestimate}), and (\ref{qnorm}) to get that when (\ref{kcondition}) is satisfied,
\begin{equation}\label{cruciall1norm}
\|\hat{\phi}^{(k)}\|_{L^1([-2\pi+\delta,2\pi-\delta])}\le \frac{8(1+\sqrt{2})^2C_H}{\gamma }k^{1/4}\left(\beta(k-1)\right)^{k-1}.
\end{equation}

Finally, we apply (\ref{fourierdecaying}) to (\ref{truncatederror}) to get for $x\in(0,1)$
$$
|f(x)-(\cA_nf)(x)|\le \|f\|_{L^2(\bR)}\frac{4(1+\sqrt{2})^2C_H}{\gamma \pi }k^{1/4}\left(\beta(k-1)\right)^{k-1}\left(\sum_{|j|> n}\frac1{|x-j|^{2k}}\right)^{1/2}.
$$
Notice that for $x\in(0,1)$,
$$
\sum_{|j|> n}\frac1{|x-j|^{2k}}\le \frac1{n^{2k}}+2\sum_{j=n+1}^\infty\frac1{j^{2k}}\le\frac1{n^{2k}}+2\int_{n}^\infty \frac1{t^{2k}}dt=(1+\frac{2n}{2k-1})\frac1{n^{2k}}.
$$
By the above two equations,
$$
|f(x)-(\cA_nf)(x)|\le \|f\|_{L^2(\bR)}\frac{4(1+\sqrt{2})^2C_H}{\gamma\pi}\frac{k^{1/4}}{n}\sqrt{1+\frac{2n}{2k-1}}\left(\frac{\beta(k-1)}{n}\right)^{k-1}.
$$
With the optimal choice (\ref{choice}), we reach
$$
|f(x)-(\cA_nf)(x)|\le \|f\|_{L^2(\bR)}\frac{4(1+\sqrt{2})^2C_H}{\gamma\pi}(\frac2\beta)^{1/4}e^{3/4}\sqrt{1+2\beta e}\frac1{n^{3/4}}\exp(-\frac{n}{\beta e}).
$$
The proof is complete.
\end{proof}

At the end of the section, we remark that it has been proved in \cite{MXZ,Qian} that exponentially decaying approximation error can be achieved in reconstructing a function $f\in\cB_\delta$ with $\delta<\pi$ from the oversampling data $f(j)$, $-n\le j\le n$. Two reconstruction algorithms were proposed therein. The one in \cite{Qian} uses a Gaussian regularizer and is thus completely different. The analysis in \cite{MXZ} essentially corresponds to the special case $\sigma=0$ and $W(\xi)\equiv1$ here. The discussion of general average sampling in this note is much more complicated.

\section{Numerical Experiments}
\setcounter{equation}{0}
In this section, we present two numerical experiments to illustrate our reconstruction method. In both experiments, the target function is
$$
f(x):=\frac{\sin(\delta x)}{\pi x},\ \ x\in\bR.
$$
We will compute the approximation error
\begin{equation}\label{computederr}
\max_{1\le j\le 9}\left|f(\frac j{10})-(\cA_n f)(\frac j {10})\right|.
\end{equation}
Our purpose is to show that the error does satisfy the estimate (\ref{ultimateestimate}) in Theorem \ref{main} and therefore to verify that it does decay exponentially. Our method requires solving linear equations with the Hilbert matrices as the coefficient matrix. The Hilbert matrices with large size are highly ill-conditioned. Fortunately, the Hilbert matrix involved in our method is of order $k=1+\lfloor \frac{n}{\beta e}\rfloor$, which is typically very small even for considerably large $n$. Note that $n$ is the number of sampling points. In fact, in both experiments, $k$ is not more than $5$ when the approximation error is already satisfactory.\newline

\noindent{\bf Experiment 1.} In this experiment, the averaging sampling is
$$
\mu_j(f):=\frac1{12}f(j-\frac1{8})+\frac1{12}f(j-\frac1{16})+\frac23f(j)+\frac1{12}f(j+\frac1{16})+\frac1{12}f(j+\frac1{8}),\ \ j\in\bZ,\ f\in\cB_\delta.
$$
We compute the approximation error (\ref{computederr}) and the projected error in the estimate (\ref{ultimateestimate}) for $\delta\in\{\frac{\pi}4,\frac{\pi}2,\frac{2\pi}3\}$ and for $n\in\{14,16,18,20,22,24\}$. The results are plotted below. The estimate (\ref{ultimateestimate}) is hence verified.

\begin{center}
\scalebox{0.55}[0.5]{\includegraphics*{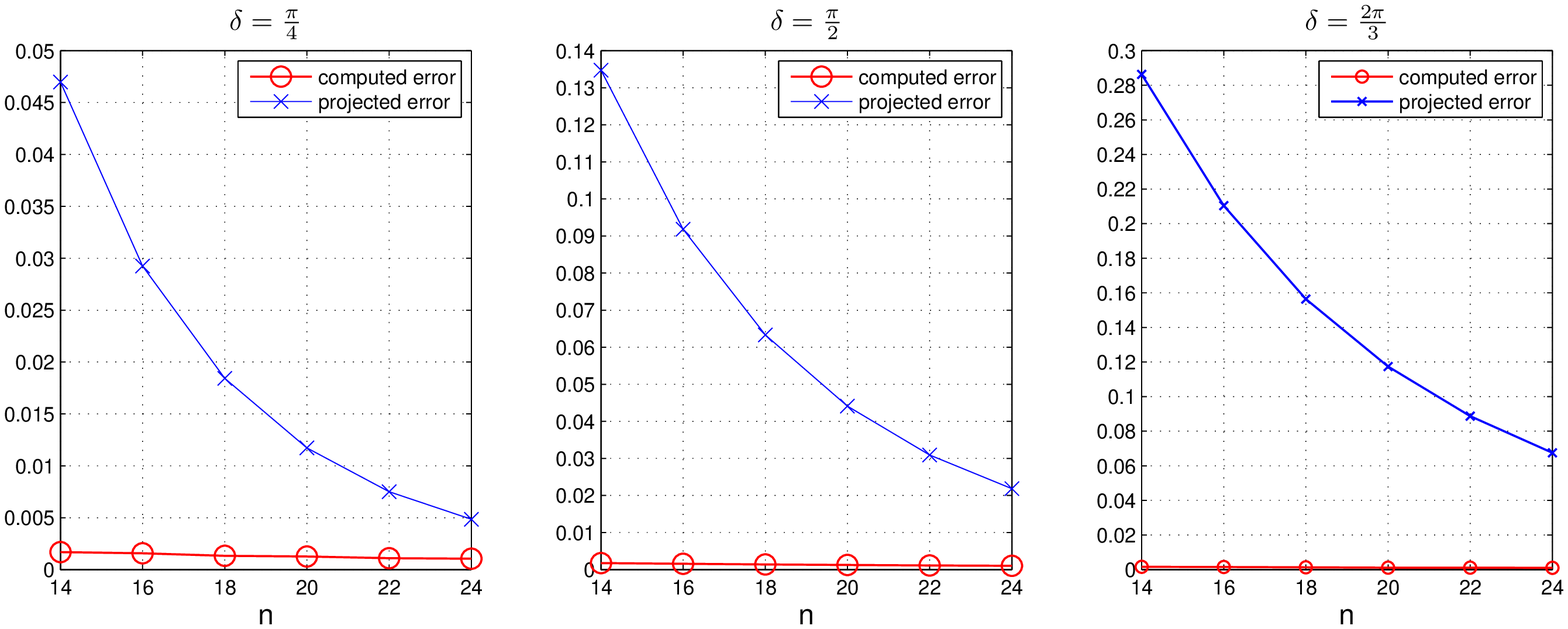}}
\end{center}

\noindent{\bf Experiment 2.} This experiment is to show that the proposed reconstruction method can converge very fast. The averaging sampling takes the form
$$
\mu_j(f):=\frac1{8}f(j-\frac1{4})+\frac34f(j)+\frac1{8}f(j+\frac1{4}),\ \ j\in\bZ,\ f\in\cB_\delta.
$$
We compute the approximation error (\ref{computederr}) and the projected error in the estimate (\ref{ultimateestimate}) for $\delta\in\{\frac{\pi}3,\frac{\pi}2,\frac{2\pi}3\}$ and for $n\in\{2,4,6,8,10,12\}$. The results are tabulated below.

$$
\begin{array}{|c|c|c|c|c|c|c|}\hline
          & n=2         &n=4  &n=6& n=8 &n=10&n=12\\\hline
\delta=\frac\pi 4&5.709\times{10^{-4}}& 2.239\times{10^{-4}}&8.689\times{10^{-5}}&2.921\times{10^{-5}}&1.976\times{10^{-5}}&1.250\times{10^{-5}}\\\hline
\delta=\frac\pi 2&1.412\times{10^{-3}}& 2.161\times{10^{-4}}&6.712\times{10^{-5}}&2.881\times{10^{-5}}&3.259\times{10^{-5}}&1.894\times{10^{-5}}\\\hline
\delta=\frac{2\pi} 3&4.023\times{10^{-4}}& 6.870\times{10^{-4}}&6.377\times{10^{-5}}&1.884\times{10^{-5}}&5.374\times{10^{-5}}&8.253\times{10^{-6}}\\\hline
\end{array}
$$

We remark that in both experiments, the computed approximation error of the proposed reconstruction method decays much faster than the upper bound in the theoretical estimate (\ref{ultimateestimate}). This is due to the reason that there might be many cancelations in adding up $f(j)\phi(x-j)$. In our estimate, we use the Cauchy-Schwartz inequality and thus view them as having the same sign.

{\small

}
\end{document}